  \def\bm#1{#1}%
  \def\tilde#1{#1}%
\tikzset{
    every picture/.style={scale=0.7},
    every node/.style={font=\small}
}
\theoremstyle{plain}
\newtheorem{theorem}{Theorem}[section]
\newtheorem{lemma}[theorem]{Lemma}
\theoremstyle{definition}
\newtheorem{approximation}[theorem]{Approximation}
\newcommand{\cmark}{\ding{51}}%
\begin{document}

\title{Secure Federated Graph-Filtering for Recommender Systems}

\author{
    Julien Nicolas\textsuperscript{1, 2}\textsuperscript{\textdagger},
    César Sabater\textsuperscript{2},
    Mohamed Maouche\textsuperscript{3}, \\
    Sonia Ben Mokhtar\textsuperscript{2},
    Mark Coates\textsuperscript{1} \\
    \\
    \small\textsuperscript{1}ILLS, MILA, McGill University
    \small\textsuperscript{2}CNRS, INSA Lyon, LIRIS \\
    \small\textsuperscript{3}Inria, INSA Lyon, CITI
\\
    \small\textsuperscript{\textdagger}Corresponding author: \texttt{julien.nicolas@insa-lyon.fr}
}

\maketitle

\begin{abstract}
Recommender systems often rely on graph-based filters, such as normalized item-item adjacency matrices and low-pass filters. While effective, the centralized computation of these components raises concerns about privacy, security, and the ethical use of user data. This work proposes two decentralized frameworks for securely computing these critical graph components without centralizing sensitive information. The first approach leverages lightweight Multi-Party Computation and distributed singular vector computations to privately compute key graph filters. The second extends this framework by incorporating low-rank approximations, enabling a trade-off between communication efficiency and predictive performance. Empirical evaluations on benchmark datasets demonstrate that the proposed methods achieve comparable accuracy to centralized state-of-the-art systems while ensuring data confidentiality and maintaining low communication costs. Our results highlight the potential for privacy-preserving decentralized architectures to bridge the gap between utility and user data protection in modern recommender systems.
\end{abstract}

\section{INTRODUCTION}
Recommender systems enable personalized user experiences in services ranging from e-commerce to media streaming and social networks. Over the years, the quest for better performance has led to classical Collaborative Filtering (CF) approaches such as Matrix Factorization (MF) \citep{hu2008collaborative} and Neural Network-based (NN-based) recommender systems, followed by Graph Neural Network (GNN) approaches. More recently, state-of-the-art systems like GF-CF \citep{shen2021powerful} and BSPM \citep{choi2023blurring} achieve superior performance by incorporating graph-based filters such as the normalized item-item matrix and the ideal low-pass filter \citep{he2017neural, he2020lightgcn}. These methods are typically evaluated on large-scale centralized datasets.

However, the wide use of centralized recommender systems raises concerns over data privacy, security, and the monopolization of user data by a few corporations, which could lead to censorship, promotion of some products over others, or risk of private data leaks due to attacks~\citep{agrawal2000privacy, gonzalez2019global}. By decentralizing these systems, it becomes possible to give users more control over their personal data and to prevent it from being concentrated in the hands of a few powerful economic players.

Enforcing privacy in recommender systems is not a new goal: cryptographic techniques can offer strong guarantees but are generally computationally costly~\citep{zhang2020batchcrypt, chai2020secure}, and Differential Privacy (DP) may cause heavy utility loss~\citep{seeman2024between}. Decentralization, which prioritizes keeping data locally rather than distributing it, offers a flexible path to privacy and data ownership and can also be combined with cryptographic or differential privacy techniques for enhanced guarantees. In the context of recommender systems, decentralization allows users to retain control over their personal data while contributing to collaborative computations~\citep{belal2022pepper}. However, implementing decentralized learning presents a number of research challenges, such as ensuring competitive model accuracy, minimizing communication costs, and protecting data privacy.

Federated Learning (FL) \citep{mcmahan2016federated, muhammad2020fedfast, jalalirad2019simple} and Gossip Learning (GL) \citep{hegedHus2021decentralized} emerged to address these challenges. FL allows multiple participants to collaboratively train machine learning models, with the promise of keeping their data local by coordinating the training through a central server that aggregates locally computed updates. This framework has shown promise for simpler models, such as matrix factorization~\citep{chai2020secure}, but scaling it to graph-based recommender systems remains largely unexplored \citep{he2021fedgraphnn, wu2023gnn4fr} as some require knowledge of global graph information during training~\citep{he2020lightgcn, zhang2022iagcn}. GL is a fully decentralized paradigm which does not require a central server. Instead, participants exchange model updates or partial computations with a subset of their peers. Although gossip-based methods are more robust to failures and can scale to large networks \citep{hegedHus2021decentralized}, these advantages come with trade-offs in terms of convergence speed and implementation complexity for models involving global operations like Singular Value Decomposition. As a consequence, most existing decentralized approaches \citep{belal2022pepper, han2004scalable} for recommender systems focus on simple models (e.g., $k$-Nearest Neighbors, Matrix Factorization \citep{hu2008collaborative} or Generalized Matrix Factorization \citep{he2017neural}) and use small data subsets (i.e., $\sim$1000 users \citep{belal2022pepper}), facing scalability challenges in simulation. Hence, while FL and other decentralized solutions are natural candidates for privacy-preserving recommender systems, they struggle to address the complexity of recent powerful methods that rely on global graph-based filtering.

To address this gap, we introduce \textbf{PriviRec} and \textbf{PriviRec-$k$} to decentralize the computation of global graph-based filters in a confidential manner. \textbf{PriviRec} uses Secure Aggregation to collaboratively compute the widely used normalized item-item matrix and the ideal low-pass filter without exposing individual user data. \textbf{PriviRec-$k$} extends this approach by the use of low-rank approximations to reduce communication costs, enabling scalability to larger datasets while maintaining competitive recommendation accuracy.

Our main contributions are as follows:
\begin{enumerate}[leftmargin=*]
\item \textbf{Computations on decentralized data:} We develop algorithms to compute key components of  graph-based recommender systems (See Table~\ref{tab:teaser}) on decentralized data, namely the normalized item-item matrix and ideal low-pass filters.

\item \textbf{Confidentiality:} We use Secure Aggregation for our decentralized computations, and ensure that the results of our computations do not leak confidential user data. To the best of our knowledge, we are the first to decentralize key recommender system components while preserving user confidentiality and incurring no loss in recommendation utility, and keeping a reasonable computational overhead.

\item \textbf{Communication efficiency:}  \textbf{PriviRec-$k$} uses low-rank approximations to offer a trade-off between accuracy and communication costs, addressing a common bottleneck in decentralized systems.

\item \textbf{Empirical benchmarking:} We validate our decentralized methods on standard full-scale recommendation datasets, including Gowalla, Yelp2018, and Amazon-Book, which contain up to 50,000 users. We show that the proposed methods achieve performance comparable to centralized state-of-the-art models.
\end{enumerate}

\section{PRELIMINARIES}
\subsection{Centralized Recommender Systems}

Recommender systems typically rely on a large dataset of user-item interactions held by a single data curator to predict user preferences and recommend interesting items. These systems use algorithms such as matrix factorization, nearest-neighbor approaches, or graph-based methods to model user behavior and item similarities. We present here key concepts to serve as background material for our contributions.

\subsubsection{Graph-Based Representation}
   User-item interaction histories used to train a recommender systems are often modeled as user-item bipartite graphs \cite{he2020lightgcn, shen2021powerful}. For one such graph $\mathcal{G}(\mathcal{V}, \mathcal{E})$, nodes $\mathcal{V}$ represent either users ($\mathcal{U}$) or items ($\mathcal{I}$), and edges $\mathcal{E}$ represent interactions between users and items. The whole graph is typically held by a central server. 

\subsubsection{User-Item interaction matrix $\bm{R}$:}
   This matrix captures user preferences, where each entry $ r_{ui}$ represents the interaction between user $u \in \mathcal{U}$ and item $i \in \mathcal{I}$. $r_{ui}$ may be binary (whether there exists an edge in $\mathcal{E}$ between the nodes associated with $u$ and $i$) or continuous (e.g., ratings). We focus on the binary case and define $r_{ui} = 1$ if user $u$ interacted with item $i$ and $r_{ui} = 0$ otherwise.
\subsubsection{Normalized User-Item interaction matrix $\tilde{\bm{R}}$:}
   The User-Item Interaction Matrix $\bm{R}$ is often normalized to avoid overweighting popular items or popular users. In \citet{choi2023blurring, shen2021powerful}, the normalized users and items degrees matrices are defined as:
       $\label{U}
       \bm{U} = \text{Diag}(\bm{R} \cdot \bm{1}_{|\mathcal{I}|})$
and
       $\label{V}
      \bm{V} = \text{Diag}(\bm{1}_{|\mathcal{U}|}^\top  \cdot \bm{R}).$
Then we have:
   \begin{equation}
    \label{normuseritem}
       \tilde{\bm{R}} = \bm{U}^{-\alpha} \bm{R} \bm{V}^{\alpha-1},
   \end{equation}
   where $\alpha$ is a hyperparameter.
\subsubsection{Normalized Item-Item Matrix $ \tilde{\bm{P}} $:}
   The normalized item-item matrix represents similarities or relationships between items based on user interactions. As in \citet{choi2023blurring, shen2021powerful}, it is defined as:  
   \begin{equation}
   \label{normitemitemmat}
      \tilde{\bm{P}} = \tilde{\bm{R}}^\top \tilde{\bm{R}}.
   \end{equation}
   The normalized item-item matrix $ \tilde{\bm{P}} $ can be viewed as an adjacency matrix that models item connectivity.

\subsubsection{Ideal Low Pass Filter $ \bm{F}_{IDL} $}
   This filter is used in graph signal processing to smooth signals (e.g., ratings) over a graph, to keep dominant patterns while suppressing noise. It is defined as:  
   \begin{equation}
   \label{lowpassfilter}
       \bm{F}_{IDL} = \bm{V}^{-\frac{1}{2}} \bm{S}_k \bm{S}_k^\top \bm{V}^{\frac{1}{2}},
   \end{equation}
   where $ \bm{S}_k$ contains the top-$k$ singular vectors of the normalized interaction matrix $ \tilde{\bm{R}}$.
   
\subsubsection{Workflow in Centralized Item-Item Matrix Based Systems}
In the centralized setting, the construction and application of an item-item matrix based recommender system generally proceed as follows:
\begin{enumerate}[leftmargin=*]
\item Graph Construction: Generate a graph representation of the data, where nodes and edges correspond to items and their relationships.
\item Preprocessing: Construct the interaction matrix $\bm{R}$, normalize it according to Eq.~\eqref{normuseritem}, and compute derived matrices like $\tilde{\bm{P}}$ using Eq.~\eqref{normitemitemmat}.
\item Model Training/Fitting: Use algorithms such as matrix factorization or graph neural networks to learn embeddings or predictive models.
\item Recommendation: Predict user-item interactions using the learned model and components such as the item-item, user-item or user-user matrices or the ideal low pass filter. We give a summary of the usage of those components and their performance in Table~\ref{overall_ndcg} in Appendix.
\end{enumerate}

\textbf{In this work}, we focus on recommender models GF-CF \citep{shen2021powerful}, BSPM \citep{choi2023blurring} and Turbo-CF~\citep{park2024turbo}, which offer state-of-the-art recommendation accuracy, scalability and training time. Both GF-CF and BSPM use the normalized item-item matrix $\tilde{\bm{P}}$ and the ideal low pass filter $\bm{F}_{IDL}$ while Turbo-CF only uses $\tilde{\bm{P}}$ (See Table~\ref{tab:teaser}). 

\paragraph{GF-CF} During the inference phase of GF-CF, the predicted interaction matrix $\bm{S}$ is computed as:
\begin{equation}
    \label{GFCFinference}
    \bm{S} = \bm{R} \cdot (\tilde{\bm{P}} + \gamma \cdot \bm{F}_{IDL}),
\end{equation}
where $\gamma$ is a hyperparameter, $\gamma=0.3$ in \citet{shen2021powerful}. $\alpha=\frac{1}{2}$ for the computation of $\tilde{\bm{R}}$.

\paragraph{Turbo-CF} During the inference phase of Turbo-CF, the predicted interaction matrix $\bm{S}$ is computed as:
\begin{equation}
    \label{TurboCFinference}
    \bm{S} = \bm{R} \cdot (\sum_{k=1}^K \alpha_k \overline{\bm{P}}),
\end{equation}
where $\overline{\bm{P}} = \tilde{\bm{P}}^{\circ s}$, with $\circ \cdot$ representing element-wise exponentiation. $s$, $K$, $\alpha$ and $\alpha_k$ are hyperparameters whose optimal values are given by \citet{park2024turbo} depending on the datasets.

\paragraph{BSPM} We refer to the original paper of \citet{choi2023blurring} for the computation of the interaction matrix $\bm{S}$ using BSPM. It is a generalization of Eq.~\eqref{GFCFinference} using Ordinary Differential Equations. The set of hyperparameters can be selected so that its yields exactly Eq.~\eqref{GFCFinference}. As for GF-CF, $\alpha =\frac{1}{2}$ for the computation of $\tilde{\bm{R}}$.

\subsection{Secure Aggregation}
Let $\mathcal{U}$ be a set of users connected to a server and wanting to participate in a computation, with  $|\mathcal{U}| = n$.
Let matrices $ \{\bm{R}^{(u)} | u \in \mathcal{U}\}$ be held locally by users, i.e., $\bm{R}^{(u)}$ is locally held by user $u$. We denote by $\bm{A} = SecAgg(\bm{R}^{(u)}, \mathcal{U})$ the Secure Aggregation of locally held matrices $\bm{R}^{(u)}$ over the set of users $\mathcal{U}$. It is equivalent to computing $\bm{A} = \sum_{u \in \mathcal{U}} \bm{R}^{(u)}$ over a secure channel without revealing each individual contribution $\bm{R}^{(u)}$. In this paper, we use a polylogarithmic-cost version of Secure Aggregation introduced in \citet{bell2020secure} to reduce the communication and computation overhead, although our algorithms are compatible with a wide range of Secure Aggregation protocols (e.g., \citep{so2022lightsecagg, bonawitz2016practical, bonawitz2017practical, bell2020secure}) and decentralized private averaging methods using correlated noise (e.g., \citep{sabater2022accurate, allouah2024privacy}).

\section{PROPOSED METHOD}
We propose collaborative and privacy-preserving methods to compute the normalized item-item matrix (Eq.~\eqref{normitemitemmat}) and the ideal low-pass filter (Eq.~\eqref{lowpassfilter}), which are fundamental components in constructing general filters for recommender systems  (see Table~\ref{overall_ndcg} in Appendix for a review). To enable their use in decentralized settings while preserving user privacy, we propose two variants of our method:

\begin{enumerate}[leftmargin=*]
\item \textbf{PriviRec}: A simple yet efficient approach for computing the necessary components in a privacy-preserving and distributed manner. 
\item \textbf{PriviRec-$k$}: A variant that offers a trade-off between communication cost and recommendation accuracy by utilizing low-rank decompositions.
\end{enumerate}

The components computed using either PriviRec or PriviRec-$k$ can subsequently be used and combined in multiple ways. In our experiments, we use them to instantiate decentralized versions of GF-CF and BSPM to benchmark their utility. In the following subsections, we describe the major steps of our workflows.
\subsection{System model}

Our system consists of a set of $n$ users (clients) and a central coordinating node (server).  
\textbf{Clients} hold local subsets of user-item interactions $\bm{R}^{(u)} \in \mathbb{R}^{|\mathcal{I}|}$, where $\mathcal{I}$ denotes the set of items. Thus, the rows of the entire interaction matrix $\bm{R}$ are distributed across $n$ clients, who never directly share their raw interaction profiles.  
\textbf{The server} coordinates the Secure Aggregation protocols and performs the most computationally expensive tasks such as factorization and normalization. Details of the computation and communication models are presented later. We give an overview of our system in Figure~\ref{fig:system_model}.
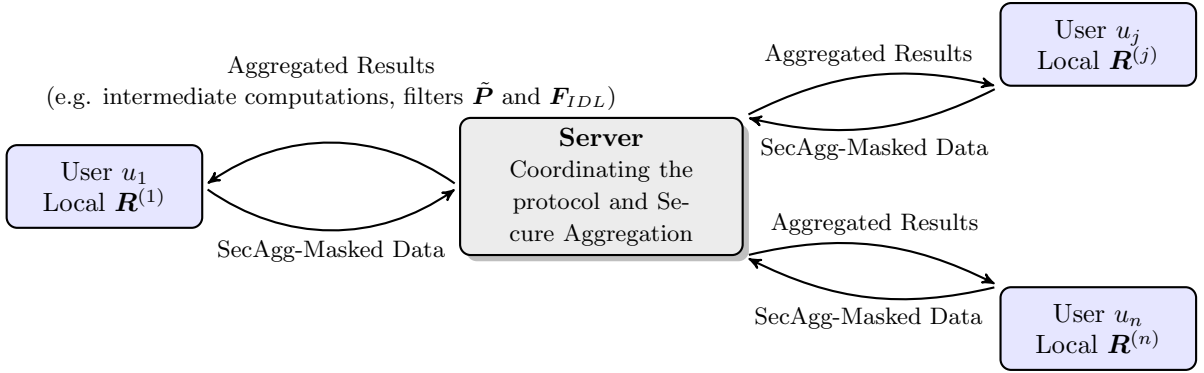
\begin{figure*}
\centering
\begin{tikzpicture}[
  node distance=1.8cm,
  >=stealth',
  thick,
  font=\small,
  rectnode/.style={rectangle, draw, rounded corners, align=center, minimum width=2.6cm, minimum height=1.1cm},
  arrowstyle/.style={->, shorten >=2pt, shorten <=2pt}
]

\node[rectnode, fill=gray!15, drop shadow, text width=3.5cm, align=center] (server) 
{ \textbf{Server}\\\footnotesize{Coordinating the protocol and Secure Aggregation} };

\node[rectnode, fill=blue!10, left=3.4cm of server, text width=2.0cm] (user1) 
{User $u_1$\\Local $\bm{R}^{(1)}$};
\node[rectnode, fill=blue!10, above right=0.4cm and 3.4cm of server, text width=2.0cm] (user2)
{User $u_j$\\Local $\bm{R}^{(j)}$};
\node[rectnode, fill=blue!10, below right=0.4cm and 3.4cm of server, text width=2.0cm] (userN)
{User $u_n$\\Local $\bm{R}^{(n)}$};


\draw[arrowstyle] (user1.east) to[bend left=-35]
  node[below, align=center, pos=0.5]{\footnotesize SecAgg-Masked Data}
  (server.west);

\draw[arrowstyle] (user2.south west) to[bend right=-20] 
  node[below, align=center, pos=0.5]{\footnotesize SecAgg-Masked Data}
  (server.north east);

\draw[arrowstyle] (userN.north west) to[bend right=-20] 
  node[below, align=center, pos=0.5]{\footnotesize SecAgg-Masked Data}
  (server.south east);


\draw[arrowstyle] (server.west) to[bend right=35] 
  node[above, align=center, pos=0.5]{\footnotesize Aggregated Results \\
  \footnotesize (e.g. intermediate computations, filters $\tilde{\bm{P}} \text{ and } \bm{F}_{IDL}$)\vspace{3mm}} 
  (user1.east);

\draw[arrowstyle] (server.north east) to[bend left=20] 
  node[above, align=center, pos=0.5]{\footnotesize Aggregated Results} 
  (user2.south west);

\draw[arrowstyle] (server.south east) to[bend left=20] 
  node[above, align=center, pos=0.5]{\footnotesize Aggregated Results} 
  (userN.north west);

\end{tikzpicture}
\vspace{-3mm}
\caption{High-level system model for PriviRec. Each user $u$ holds local interaction data $\bm{R}^{(u)}$. A central server coordinates the overall protocol. It uses Secure Aggregation to receive masked (confidential) partial sums from the clients and homomorphically sums them to compute unmasked global filters  (e.g. $\tilde{\bm{P}}, \bm{F}_{IDL}$). It then \textbf{broadcasts the aggregated results} back to the clients.}
\label{fig:system_model}
\vspace{-3mm}
\end{figure*}
\subsection{Threat model}
We consider an \textbf{honest-but-curious threat model}, commonly assumed in federated learning and Secure Multi-Party Computation~\citep{bonawitz2017practical, bell2020secure}, where all parties follow the protocol correctly but may attempt to infer private information. We assume a fixed set of participants, i.e., no dropouts. We want our system to be \textbf{confidential} under this threat model, that is, individual user contributions must not be exposed in the clear during the protocol, and the aggregate must not leak sensitive information. It also needs to be \textbf{correct}: the computed aggregate and outputs must match the non-confidential results as originally defined. Finally, we want the composition of the operations used in the system to be \textbf{secure}, i.e., their composition should not leak unintended information.
\subsection{PriviRec}
PriviRec aims to compute the necessary components for recommendation securely in a distributed setting, where each user keeps their data locally. 

Our approach consists of two main steps, each using polylogarithmic communication and computation cost Secure Aggregation \citep{bell2020secure} to aggregate data from all users without revealing individual contributions. First, we collaboratively compute the item degrees matrix. Then, we compute the item-item matrix and the ideal low-pass filter in a distributed manner. To compute the ideal low-pass filter, we propose a distributed version of the Randomized Power Iteration algorithm, which approximates the leading singular vectors needed for this filter without centralizing any data. 

We summarize the proposed method as a sequence diagram in Figure~\ref{fig:privirec_diagram} in Appendix D and elaborate on the different steps of the algorithm in the following sections.

\subsubsection{Distributed item degrees vector computation:}
\label{itemdegreesvect}
We denote the one-filled column vector of dimension $d$ as $\bm{1}_d$. 
We define $\bm{R}^{(u)}$ as the $u^{\text{th}}$ row of $\bm{R}$, i.e., the part of $\bm{R}$ locally held by user $u$.

 In a distributed and confidential setting, each user $u$ usually only has access to its own interactions $\bm{R}_{u}$. However, it is possible to confidentialy compute the sums of distributed values amongst users
 using Secure Aggregation with low communication and computation overhead (see Subsection \ref{commcostanalysis} for more details). 
By the following, we can compute $\bm{V}$ (defined in Eq.~\eqref{V}) without revealing each individual component $\bm{R}^{(u)}$:
\begin{equation}
    \begin{split}
        \bm{V} &= \text{Diag}(\bm{1}_{|\mathcal{U}|}^\top  \cdot \bm{R}) \\
        &= \text{Diag}(SecAgg(\bm{R}^{(u)}, \mathcal{U})). \\
    \end{split}
\end{equation}

\subsubsection{Distributed item-item gram matrix computation:}
\label{itemitemmatrix}
\begin{theorem}
    We can compute $\tilde{\bm{P} }$ (defined in Eq.~\eqref{normitemitemmat}) in a distributed and secure setting as:
\begin{equation}
    \begin{split}
        \tilde{\bm{P} } = \bm{V}^{-\frac{1}{2}}(SecAgg(\frac{1}{d_{user}(u)} {\bm{R}^{(u)}} ^\top\bm{R}^{(u)}, \mathcal{U})) \bm{V}^{-\frac{1}{2}}.\\
    \end{split}
\end{equation}
\end{theorem}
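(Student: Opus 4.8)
The plan is to reduce the statement to a purely algebraic identity that rewrites $\tilde{\bm{P}}$ as a sum of locally computable per-user rank-one terms, and then to invoke the defining property of $SecAgg$ to realize that sum confidentially. Throughout I use the parameter choice $\alpha=\tfrac12$ adopted by GF-CF and BSPM, so that the normalized interaction matrix of Eq.~\eqref{normuseritem} is $\tilde{\bm{R}} = \bm{U}^{-1/2}\bm{R}\bm{V}^{-1/2}$.

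First I would expand Eq.~\eqref{normitemitemmat}. Since $\bm{U}$ and $\bm{V}$ are diagonal, hence symmetric, $\tilde{\bm{R}}^\top = \bm{V}^{-1/2}\bm{R}^\top\bm{U}^{-1/2}$, and using $\bm{U}^{-1/2}\bm{U}^{-1/2}=\bm{U}^{-1}$,
\begin{equation*}
\tilde{\bm{P}} = \tilde{\bm{R}}^\top\tilde{\bm{R}} = \bm{V}^{-1/2}\,\bm{R}^\top\bm{U}^{-1}\bm{R}\,\bm{V}^{-1/2}.
\end{equation*}
The outer factors $\bm{V}^{-1/2}$ are already available from the distributed item-degrees step of Section~\ref{itemdegreesvect}, so it only remains to identify the middle factor $\bm{R}^\top\bm{U}^{-1}\bm{R}$ with the claimed aggregate.

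The key step is to note that $\bm{U}^{-1}$ is diagonal with $u$-th entry $1/d_{user}(u)$, where $d_{user}(u)=\bm{R}^{(u)}\bm{1}_{|\mathcal{I}|}$ is user $u$'s degree, and then to expand the triple product as a sum of rank-one terms indexed by users,
\begin{equation*}
\bm{R}^\top\bm{U}^{-1}\bm{R} \;=\; \sum_{u\in\mathcal{U}} \frac{1}{d_{user}(u)}\,{\bm{R}^{(u)}}^\top \bm{R}^{(u)},
\end{equation*}
which is checked entrywise (the $(i,j)$ coordinate of either side is $\sum_{u} r_{ui}r_{uj}/d_{user}(u)$). Each summand $\tfrac{1}{d_{user}(u)}{\bm{R}^{(u)}}^\top \bm{R}^{(u)}$ depends only on user $u$'s local profile $\bm{R}^{(u)}$, so by the defining identity $SecAgg(\bm{M}^{(u)},\mathcal{U})=\sum_{u\in\mathcal{U}}\bm{M}^{(u)}$ the middle factor equals $SecAgg\big(\tfrac{1}{d_{user}(u)}{\bm{R}^{(u)}}^\top \bm{R}^{(u)},\mathcal{U}\big)$, which gives the stated formula. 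Correctness is then immediate because $SecAgg$ returns the exact sum; confidentiality follows because each user's summand (and, in the preceding step, each $\bm{R}^{(u)}$ used to form $\bm{V}$) is disclosed only through the $SecAgg$ primitive, inheriting the guarantees of \citet{bell2020secure}, while the only cleartext outputs, $\bm{V}$ and the aggregated Gram matrix, are sums over all participants and hence also composable with the later steps of the protocol.

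The one delicate point I anticipate is bookkeeping around the division by $d_{user}(u)$: users with an empty interaction profile must be excluded (or treated as contributing the zero matrix) so that $\bm{U}^{-1}$ and the reweighting are well defined, and one should confirm that a user can assemble its contribution from local data alone, which holds since $d_{user}(u)$ is itself local. Everything else is the routine matrix algebra sketched above; there is no deeper obstacle, the substance of the theorem being exactly the observation that the diagonally reweighted Gram matrix $\bm{R}^\top\bm{U}^{-1}\bm{R}$ decomposes additively across users.
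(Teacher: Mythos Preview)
Your proof is correct and follows essentially the same approach as the paper: expand $\tilde{\bm{P}}=\tilde{\bm{R}}^\top\tilde{\bm{R}}$ to isolate the outer $\bm{V}^{-1/2}$ factors, then verify entrywise that the inner factor $\bm{R}^\top\bm{U}^{-1}\bm{R}$ equals $\sum_{u}\tfrac{1}{d_{user}(u)}{\bm{R}^{(u)}}^\top\bm{R}^{(u)}$ and invoke $SecAgg$. Your presentation is, if anything, slightly tidier (collapsing $\bm{U}^{-1/2}\bm{U}^{-1/2}$ into $\bm{U}^{-1}$ immediately), and your remark about handling users with $d_{user}(u)=0$ is a useful caveat not made explicit in the paper.
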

\begin{proof}
    The item-item normalized adjacency matrix is defined as:
\begin{equation}
\begin{split}
\tilde{\bm{P} } & =\tilde{\bm{R}}^\top\tilde{\bm{R}} \\
&=  \bm{V}^{-\frac{1}{2}} (\bm{U}^{-\frac{1}{2}}\bm{R} )^\top(\bm{U}^{-\frac{1}{2}} \bm{R} )\bm{V}^{-\frac{1}{2}}. \\
\end{split}
\end{equation}

We then define $\bm{P}' = (\bm{U}^{-\frac{1}{2}}\bm{R} )^\top(\bm{U}^{-\frac{1}{2}} \bm{R} )$, $\bm{P}'_{ij}$ as the element of $\bm{P}'$ at line $i$ and column $j$ and $d_{user}(u) =  \sum_{i=0}^{|\mathcal{I}|-1} r_{ui}$. This leads to the following expression for the $(i,j)$-th element of $\bm{P}'$: 
\begin{equation}
\begin{split}
\bm{P}'_{ij} &= ((\bm{U}^{-\frac{1}{2}}\bm{R} )^\top(\bm{U}^{-\frac{1}{2}} \bm{R} ))_{ij} \\
&=  ((\bm{U}^{-\frac{1}{2}}\bm{R})_{*,i})^\top( \bm{U}^{-\frac{1}{2}}\bm{R})_{*,j} \\
&=  \sum_{u=0}^{N-1} \frac{1}{d_{user}(u)} \cdot r_{ui} \cdot r_{uj}. \\
\end{split}
\end{equation}

By noticing that $(\bm{R} ^\top\bm{R} )_{ij} =  \sum_{u} r_{ui} \cdot r_{uj}$, we can deduce that $\bm{P}' = \sum_{u} \frac{1}{d_{user}(u)} {\bm{R}^{(u)}} ^\top \bm{R}^{(u)}$.
We can therefore compute $\tilde{\bm{P} }$ in a distributed and privacy-preserving setting:
\begin{equation}
    \begin{split}
        \tilde{\bm{P} } &= \bm{V}^{-\frac{1}{2}}\bm{P}' \bm{V}^{-\frac{1}{2}}\\
        &= \bm{V}^{-\frac{1}{2}}(\sum_{u} \frac{1}{d_{user}(u)} {\bm{R}^{(u)}} ^\top \bm{R}^{(u)}) \bm{V}^{-\frac{1}{2}}\\
        &= \bm{V}^{-\frac{1}{2}}(SecAgg(\frac{1}{d_{user}(u)} {\bm{R}^{(u)}} ^\top \bm{R}^{(u)}, \mathcal{U})) \bm{V}^{-\frac{1}{2}}. \qedhere
    \end{split}
\end{equation}
\vspace{-0.4cm}
\end{proof}
\subsubsection{Distributed ideal low pass filter computation:}
\label{ilpf}

The ideal low pass filter is computed using the singular vectors $\bm{S}_k$ of the normalized interaction matrix (Eq.~\eqref{computeilpf}). In the centralized recommender system literature \citep{choi2023blurring, shen2021powerful}, these vectors are approximated using  a centralized version of the randomized power method (Algorithm 4.4 in \citet{halko2011finding}).

With Theorem~\ref{computeilpf} below, we propose a decentralized and secure adaptation of the randomized power method. This version is similar to Algorithm 1 of \citet{2024matrixfactorisation}, although it is less susceptible to numerical instabilities (see Remark 4.3 in \citet{halko2011finding}) and uses the Gram-Schmidt QR procedure to output an approximation of the leading eigenvectors and not of their range.  Differentially private variants of this procedure for positive semi-definite matrices have also been explored by \citet{hardt2014noisy}, 
\citet{balcan2016improved}, and 
\citet{nicolas2024differentially}. 

\begin{algorithm}[ht]
	\begin{algorithmic}
	\caption{Distributed secure power method}
    \label{alg:decentralizedpowermethod}
	\STATE{\textbf{Input}: distributed matrices  $\{\bm A^{(u)} | u \in \mathcal{U}\}$, number of iterations $L$, target rank $k$, iteration rank $p\geq k$.}
	\STATE{\textbf{Output}: approximated eigen-space $\bm X_L\in\mathbb R^{m\times p}$, with orthonormal columns, upper triangular matrix $\bm{T}_\ell$.}
	\STATE{\textbf{Initialization}}: orthonormal $\bm X_0\in\mathbb R^{m\times p}$ by QR decomposition on a random Gaussian matrix $\bm G_0$;\\
    \STATE The central node broadcasts $\bm X_{0}$ to all $n$ computing nodes;\\
    \STATE Computing node $u$ computes $\bm Y^{(u)}_0={\bm A^{(u)}}^\top \bm X_{0}$.\\
    \STATE The central node computes with the clients $\bm Y_0= SecAgg({\bm Y_0^{(u)}},  \mathcal{U})$. \\
    \STATE The central node computes QR factorization $\bm Y_0=\bm X_0\bm T_0$.
    \FOR{$\ell=1$ to $L-1$}
	\STATE{
        1. The central node broadcasts $\bm X_{\ell-1}$ to all $n$ computing nodes;\\
		2. Computing node $u$ computes $\bm Y^{(u)}_\ell={\bm A^{(u)}}^\top \bm A^{(u)}\bm X_{\ell-1}$.\\
		3. The central node computes with the clients $\bm Y_\ell= SecAgg({\bm Y_\ell^{(u)}},  \mathcal{U})$. \\
  		4. The central node computes QR factorization $\bm Y_\ell=\bm X_\ell\bm T_\ell$. \\
	}
	\ENDFOR
	\end{algorithmic}
\end{algorithm}

\begin{theorem}
\label{computeilpf}

Using Algorithm~\ref{alg:decentralizedpowermethod}, we can collaboratively compute the ideal low-pass filter $\bm{F}_{IDL}$. We provide the proof in Appendix A. Our approach is also depicted as a sequence diagram in Figure~\ref{fig:secure_power_method}. 
\end{theorem}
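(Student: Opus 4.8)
The plan is to read Theorem~\ref{computeilpf} as the statement that Algorithm~\ref{alg:decentralizedpowermethod}, instantiated on the \emph{normalized} local rows, reproduces exactly the centralized randomized power method used by GF-CF and BSPM, so that the only approximation incurred is the one already present centrally. First I would make the reduction explicit: by Eq.~\eqref{normitemitemmat} the columns of $\bm{S}_k$ are the top-$k$ right singular vectors of $\tilde{\bm{R}}$, equivalently the top-$k$ eigenvectors of $\tilde{\bm{P}}=\tilde{\bm{R}}^\top\tilde{\bm{R}}$; and with $\alpha=\tfrac12$ the $u$-th row of $\tilde{\bm{R}}$ is $\tilde{\bm{R}}^{(u)}=d_{user}(u)^{-1/2}\,\bm{R}^{(u)}\bm{V}^{-1/2}$, which user $u$ can form locally from its own profile, its own degree $d_{user}(u)$, and the item-degree matrix $\bm{V}$ obtained from the secure aggregation of Section~\ref{itemdegreesvect}. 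Hence one runs Algorithm~\ref{alg:decentralizedpowermethod} with $\bm{A}^{(u)}:=\tilde{\bm{R}}^{(u)}$ and $m=|\mathcal{I}|$.

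Next I would establish \textbf{correctness} by an iterate-by-iterate induction. Because $SecAgg$ returns exact sums, at every step $\bm{Y}_\ell=\sum_{u}{\bm{A}^{(u)}}^\top\bm{A}^{(u)}\bm{X}_{\ell-1}=\tilde{\bm{R}}^\top\tilde{\bm{R}}\,\bm{X}_{\ell-1}=\tilde{\bm{P}}\,\bm{X}_{\ell-1}$, so the sequence $(\bm{X}_\ell,\bm{T}_\ell)$ produced by the distributed protocol coincides with that of centralized subspace iteration on $\tilde{\bm{P}}$ from the same Gaussian start $\bm{G}_0$; the per-iteration Gram--Schmidt QR is precisely the reorthonormalization recommended in Remark~4.3 of \citet{halko2011finding} against the numerical collapse of $\tilde{\bm{P}}^{\,L}$. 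Consequently $\bm{S}_k$, the first $k$ columns of $\bm{X}_L$, is identical to the output of Algorithm~4.4 of \citet{halko2011finding} run centrally on $\tilde{\bm{R}}$, and the standard randomized-subspace-iteration error bound (in $L$, the oversampling $p-k$, and the spectral gap) transfers verbatim: decentralization adds no additional error. The server, or each client after one broadcast of $\bm{S}_k$ and $\bm{V}$, then assembles $\bm{F}_{IDL}=\bm{V}^{-1/2}\bm{S}_k\bm{S}_k^\top\bm{V}^{1/2}$ as in Eq.~\eqref{lowpassfilter} (or $\bm{R}\bm{F}_{IDL}$ row-wise, if only the filtered scores are needed).

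Then I would argue \textbf{confidentiality and secure composition}. Every message a client emits is a secure-aggregation share of $\bm{Y}_\ell^{(u)}$ (or of $\bm{R}^{(u)}$ in the degree step), so no individual $\tilde{\bm{R}}^{(u)}$, hence no raw profile, is ever revealed; the server and clients observe only the aggregates $\bm{V}$ and $\bm{Y}_\ell=\tilde{\bm{P}}\bm{X}_{\ell-1}$ together with the public post-processing $\bm{X}_\ell,\bm{T}_\ell,\bm{S}_k,\bm{F}_{IDL}$, each a deterministic function of $\tilde{\bm{P}}$ and the public randomness $\bm{G}_0$. Since the only private-data-dependent primitives are the $L{+}1$ calls to $SecAgg$, and $SecAgg$ reveals nothing beyond its sum, the composition inherits the security of the underlying protocol of \citet{bell2020secure} in the honest-but-curious, no-dropout model --- exactly the guarantee targeted by our threat model.

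The main obstacle is a bookkeeping one rather than a hard estimate: one must verify that the distributed recursion is bit-for-bit the centralized recursion --- in particular that QR is always performed centrally on the aggregate $\bm{Y}_\ell$ and never on a client share, that broadcasting a single $\bm{X}_{\ell-1}$ lets each client compute ${\bm{A}^{(u)}}^\top\bm{A}^{(u)}\bm{X}_{\ell-1}$ with purely local data, and that the circular dependency is handled (Algorithm~\ref{alg:decentralizedpowermethod} must run after Section~\ref{itemdegreesvect}, and treating $\bm{V}$ as public is admissible only because it is itself an aggregate under our threat model). Once this is pinned down, ``computing $\bm{F}_{IDL}$'' reduces to the convergence guarantee of the centralized randomized power method, which I would invoke rather than reprove. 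A secondary remark, not affecting correctness, is that each iteration secure-aggregates a dense $|\mathcal{I}|\times p$ matrix --- a communication cost that PriviRec-$k$ later mitigates through low-rank decompositions.
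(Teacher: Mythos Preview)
Your argument is correct, but it follows a different route from the paper's own proof. You argue by \emph{equivalence of iterates}: because $SecAgg$ returns exact sums and $\sum_u{\bm A^{(u)}}^\top\bm A^{(u)}=\tilde{\bm P}$, the distributed recursion coincides with centralized subspace iteration on the symmetric PSD matrix $\tilde{\bm P}$, whence $\bm X_L$ approximates the leading eigenvectors $\bm S_k$ and the Halko convergence bounds transfer verbatim. The paper instead works through the Halko range-finder viewpoint: it observes that Algorithm~\ref{alg:decentralizedpowermethod} yields $\bm Y=(\tilde{\bm R}^\top\tilde{\bm R})^{\alpha}\tilde{\bm R}^\top\bm\Omega=\bm Q\bm T$, defines the rank-$K$ approximation $\tilde{\bm R}_K^\top=\bm Q\bm Q^\top\tilde{\bm R}^\top$, writes its (indirect) SVD via $\bm B=\bm Q^\top\tilde{\bm R}^\top=\bm Z\bm\Sigma\bm M^\top$ so that the right singular vectors of $\tilde{\bm R}_K$ are $\bm S=\bm Q\bm Z$, and then proves the key algebraic identity $\bm S\bm S^\top=\bm Q\bm Z\bm Z^\top\bm Q^\top=\bm Q\bm Q^\top$ (since $\bm Z$ is orthonormal). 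That identity is the heart of the paper's proof: it shows the projector needed for $\bm F_{IDL}$ is \emph{exactly} $\bm Q\bm Q^\top$ without ever forming the small SVD, so the only approximation is the range-finder one. Your route buys simplicity and packages the security discussion alongside (the paper defers that to its separate Lemma~\ref{secanallemma}); the paper's route buys a clean statement of \emph{where} the approximation lives and explains why no post-hoc SVD is needed to recover $\bm S_k\bm S_k^\top$. Two minor quibbles: your induction ``$\bm Y_\ell=\tilde{\bm P}\bm X_{\ell-1}$ at every step'' skips the asymmetric initialization $\bm Y_0={\bm A}^\top\bm X_0$, which is the $\tilde{\bm R}^\top\bm\Omega$ factor in the paper's $\bm Y$; and your ``bit-for-bit'' claim is slightly stronger than what holds, since the centralized code does two QRs per power step (after $A$ and after $A^\top$) whereas Algorithm~\ref{alg:decentralizedpowermethod} does one after $A^\top A$ --- algebraically equivalent, but not numerically identical, as the paper itself notes in the results section.
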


\begin{figure}[!ht]
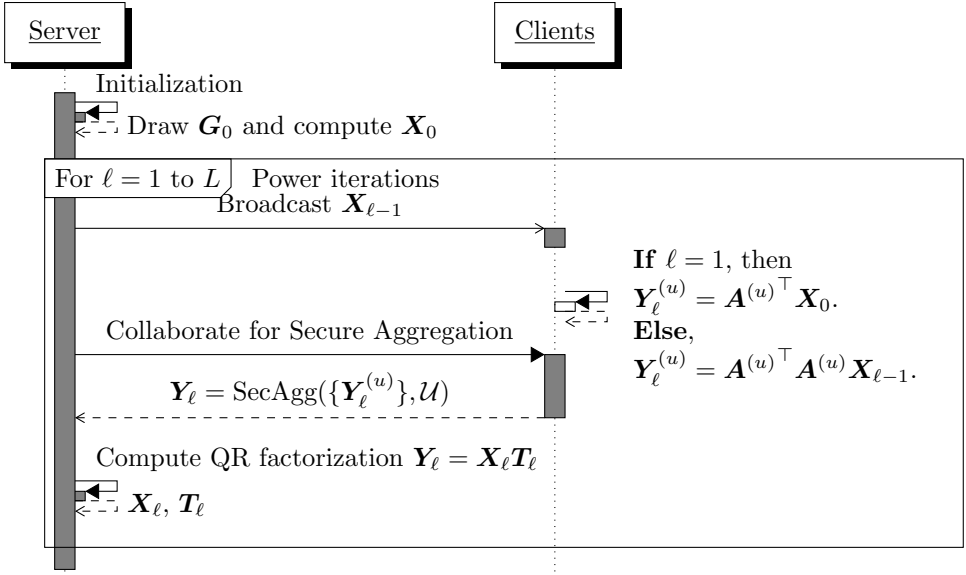

  {
  \begin{sequencediagram}
    \newthread[gray]{A}{Server}{}
    \newinst[7]{B}{Clients}{}

    \begin{call}{A}{Initialization}{A}{Draw $\bm{G}_0$ and compute $\bm{X}_0$}
    \end{call}
    \begin{sdblock}{For $\ell = 1$ to $L$}{Power iterations}
    \begin{messcall}{A}{Broadcast $\bm{X}_{\ell-1}$}{B}{}
        \end{messcall}
        \begin{call}{B}{}{B}{
        \begin{tabular}{l}
            \textbf{If} $\ell = 1$, then \\
            $\bm{Y}_\ell^{(u)} = {\bm{A}^{(u)}}^\top \bm{X}_0$. \\ 
            \textbf{Else},\\
            $\bm{Y}_\ell^{(u)} = {\bm{A}^{(u)}}^\top \bm{A}^{(u)} \bm{X}_{\ell-1}$.\\
        \end{tabular}
        }
        \end{call}

        \begin{call}{A}{Collaborate for Secure Aggregation}{B}{$\bm{Y}_\ell = \text{SecAgg}(\{\bm{Y}_\ell^{(u)}\}, \mathcal{U})$}
        \postlevel
        \end{call}
            \postlevel
        \begin{call}{A}{Compute QR factorization $\bm{Y}_\ell = \bm{X}_\ell \bm{T}_\ell$}{A}{$\bm{X}_\ell$, $\bm{T}_\ell$}
        \end{call}
    \end{sdblock}
  \end{sequencediagram}
  }
\caption{Sequence diagram for the Distributed Secure Power Method algorithm. The server and clients collaboratively compute the eigen-space of $\tilde{\bm{P}}$.}
\label{fig:secure_power_method}
\end{figure}

\subsection{PriviRec-$k$}
We now propose a variant that employs low-rank decomposition to reduce the communication cost of PriviRec. This version is also distributed and privacy-preserving but allows one to find a trade-off between communication cost and recommendation utility. PriviRec-$k$ has 3 main steps: 
\begin{enumerate}[leftmargin=*] 
\item The distributed computation of the item degrees matrix (using polylogarithmic-cost Secure Aggregation \cite{bell2020secure}), as in PriviRec. 
\item The distributed computation of the top-$k$ singular vectors and values of $\tilde{\bm{P}}$ (using Secure Aggregation and a distributed version of Randomized Power Iteration \cite{2024matrixfactorisation, halko2011finding}).
\item The computation of the ideal low-pass filter and the approximated item-item matrix using the resulting top-$k$ singular vectors and values. 
\end{enumerate}
We summarize our approach as a sequence diagram in Figure~\ref{fig:privireck_diagram} in Appendix D and detail the steps in the following sections.

\subsubsection{Communication efficient item-item matrix computation:}
\label{comeffitemitem}
As introduced in Section \ref{itemitemmatrix}, the normalized item-item matrix is defined as:
\begin{equation}
\begin{split}
\vspace{-3mm}
\tilde{\bm{P} }  =\tilde{\bm{R}}^\top\tilde{\bm{R}} 
                 =\sum_u \tilde{\bm{R}}_u^\top\tilde{\bm{R}}_u. 
\end{split}
\vspace{-3mm}
\end{equation} 
Aggregating $\tilde{\bm{R}}_u^\top\tilde{\bm{R}}_u \in \mathbb{R}^{|\mathcal{I}|\times|\mathcal{I}|}$, $\forall u \in \mathcal{U}$ can be communication-intensive. To reduce this cost, we develop this approximation:
\begin{approximation}
    We can approximate $\tilde{\bm{P}} = \bm{S}_k Diag(\bm{T}_{L-1}) \bm{S}_k^\top$, where $\bm{S}_k$ and $\bm{T}_{L-1}$ are the outputs of the Distributed Secure Power method executed on the distributed $\bm{R}^{(u)}$.
\end{approximation}
Thus, we only need to communicate $\bm{S}_k \in  \mathbb{R}^{|\mathcal{I}| \times k}$ and the diagonal $Diag(\bm T_{L-1})$ using this decomposition. The communication savings arises because $Diag(\bm T_{L-1}) \in \mathbb{R}^{k}$ as opposed to $\tilde{\bm{P}} \in \mathbb{R}^{|\mathcal{I}|\times|\mathcal{I}|}$. The matrix $\bm{S}_k$ already needs to be broadcasted to compute the Ideal Low Pass Filter. We show in Section~\ref{results} that it is possible to retain a competitive NDCG while having $k \ll |\mathcal{I}|$.
\begin{proof}[Derivation]
    Using the decentralized power method, clients collaborate to compute and approximation of the top-$k$ right singular vectors $\bm{S}_k$ of $\tilde{\bm{R}}$. This yields the following low-rank approximation:
\[
\tilde{\bm{R}}_k^\top \approx \bm{S}_k \bm{\Sigma}_k \bm{Z}_k^\top,
\]
where $\bm{\Sigma}_k$ is rectangular diagonal, $\bm{S}_k$ contains the top-$k$ right singular vectors and $\bm{Z}_k$ contains the top-$k$ left singular vectors from $\tilde{\bm{R}}_k$. We use a low rank approximation $\tilde{\bm{P}} \approx \tilde{\bm{P}_k}$ where:
\begin{align}
     \tilde{\bm{P}}_k &\approx \tilde{\bm{R}}_k^\top \tilde{\bm{R}}_k \\
     &= \bm{S}_k \Lambda_k \bm{S}_k^\top,
\end{align}
where $\Lambda_k \in \mathbb{R}^{k \times k}$ is a diagonal matrix.

At steps 2 and 3 of iteration $L-1$ of Algorithm~\ref{alg:decentralizedpowermethod}, we compute $\bm{Y}_{L-1} = {\bm A}^\top \bm A\bm X_{L-2}$. By injecting $\bm A = \tilde{\bm{R}}^\top$ and using the approximation $X_{L-2} \approx \bm{S}_k$, we arrive at the following:
\begin{align}
    \bm{Y}_{L-1} &= {\bm A}^\top \bm A\bm X_{L-2} \\
    & \approx  \tilde{\bm{P}} \bm{S}_k\\
     &\approx \bm{S}_k \Lambda_k.
\end{align}

Then at step 4, the central node computes the QR factorization $\bm{Y}_{L-1} = \bm X_{L-1}\bm T_{L-1}$, 
and we have approximately $X_{L-1} \approx \bm{S}_k$.
Therefore by left multiplication $\Lambda_k \approx \bm T_{L-1}$ and since $\Lambda_k$ should be diagonal:
\begin{align}
\vspace{-3mm}
    \Lambda_k = Diag(\Lambda_k) \approx  Diag(\bm T_{L-1}). \quad \qedhere
\end{align}
\vspace{-8mm}
\end{proof}
\subsection{Security analysis}

\begin{lemma}
\label{secanallemma}
    PriviRec and PriviRec-$k$ are confidential under the honest-but-curious threat model, i.e., respect the following principles:
    \begin{enumerate}[leftmargin=*]
    \item \textbf{Confidentiality}: Individual user contributions ($\bm{R}^{(u)}$) must not be exposed in the clear during the protocol, and the aggregate must not leak sensitive information.
    \item \textbf{Correctness}: The computed aggregate (e.g., $\bm{V}', \bm{P}'$) and outputs (e.g., $\tilde{\bm{P}}, \bm{F}_{IDL}$) must match the expected results as defined by the algorithm.
    \item \textbf{Secure Composition}: If all individual operations in the protocol are secure, and their composition does not leak unintended information, then the overall protocol is secure.
\end{enumerate}
\end{lemma}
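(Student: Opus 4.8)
The plan is to establish the three properties in sequence, leveraging the structure of the protocol as a composition of Secure Aggregation calls interleaved with local, non-interactive post-processing performed by the server (normalizations and QR factorizations). First I would treat \textbf{Confidentiality}. The key observation is that in both PriviRec and PriviRec-$k$, every message a client sends to the server is either (i) a SecAgg-masked share of a locally computed quantity — namely $\bm{R}^{(u)}$, $\frac{1}{d_{\text{user}}(u)}{\bm{R}^{(u)}}^\top\bm{R}^{(u)}$, or $\bm{Y}^{(u)}_\ell$ — or (ii) the key-agreement / masking material intrinsic to the Bell et al.\ protocol. I would invoke the security guarantee of \citep{bell2020secure} (honest-but-curious server, no dropouts, fixed participant set) to conclude that the server's view of each SecAgg round is computationally indistinguishable from a simulation that receives only the sum $\bm{V}'$, $\bm{P}'$, or $\bm{Y}_\ell$. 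Hence no individual $\bm{R}^{(u)}$ is ever exposed in the clear. For the "aggregate must not leak" clause I would argue that the released aggregates are exactly the quantities $\tilde{\bm{P}}$, $\bm{F}_{IDL}$ (or their low-rank surrogates) that the centralized GF-CF/BSPM pipeline already publishes, so by construction nothing beyond the intended output is revealed; if one wants a formal privacy statement one notes these are $n$-party sums of bounded-norm contributions and, optionally, correlated-noise mechanisms from \citep{sabater2022accurate, allouah2024privacy} can be layered on — but that is outside the honest-but-curious confidentiality claim being made here.

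Next I would handle \textbf{Correctness}. This reduces to two facts already proven earlier in the paper: Theorem (item-item) shows $\tilde{\bm{P}} = \bm{V}^{-1/2}\big(\sum_u \frac{1}{d_{\text{user}}(u)}{\bm{R}^{(u)}}^\top\bm{R}^{(u)}\big)\bm{V}^{-1/2}$ and $\bm{V} = \mathrm{Diag}(\sum_u \bm{R}^{(u)})$, while Theorem~\ref{computeilpf} / Algorithm~\ref{alg:decentralizedpowermethod} shows the distributed power method produces the same iterates $\bm{X}_\ell, \bm{T}_\ell$ as its centralized counterpart. Since SecAgg computes $\sum_{u\in\mathcal{U}}$ exactly (it is an unbiased, lossless reconstruction of the sum, modulo the finite-field embedding which is chosen large enough to avoid wraparound), the server's post-SecAgg inputs coincide with the centralized inputs; the deterministic server-side steps (scaling by $\bm{V}^{-1/2}$, QR, forming $\bm{S}_k\bm{S}_k^\top$) then yield bit-for-bit the centralized $\tilde{\bm{P}}$ and $\bm{F}_{IDL}$. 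For PriviRec-$k$ correctness is understood relative to the stated low-rank approximation (the \emph{Approximation} block), i.e.\ the protocol correctly outputs $\bm{S}_k\,\mathrm{Diag}(\bm{T}_{L-1})\,\bm{S}_k^\top$, not the exact $\tilde{\bm{P}}$.

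Finally, \textbf{Secure Composition}. Here I would argue via a standard hybrid/sequential-composition argument for semi-honest protocols: each SecAgg invocation is a secure protocol for its ideal functionality (sum), and the intermediate outputs $\bm{V}'$, $\bm{P}'$, $\bm{Y}_0,\dots,\bm{Y}_{L-1}$ that are revealed between invocations are themselves admissible outputs — each is just a partial sum or an iterate that the adversary is entitled to see in the final protocol anyway. Crucially the later SecAgg inputs $\bm{Y}^{(u)}_\ell$ are deterministic functions of the \emph{public} broadcast $\bm{X}_{\ell-1}$ and the client's own $\bm{A}^{(u)}$, so composing a simulator for each round (feeding it the corresponding revealed aggregate) and stitching them yields a simulator for the whole execution. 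The main obstacle — and the part I expect to require the most care — is making precise that the sequence of revealed intermediates does not collectively leak more than the final filters: in principle, seeing every power-iteration iterate $\bm{Y}_\ell = \tilde{\bm{P}}^{\,?}\bm{X}_{\ell-1}$ could reveal information about $\tilde{\bm{P}}$ beyond its top-$k$ subspace. I would address this by noting that each $\bm{Y}_\ell$ is a function of $\tilde{\bm{P}}$ (itself the sanctioned output) and public randomness $\bm{G}_0$, hence adds nothing beyond $\tilde{\bm{P}}$; the honest-but-curious guarantee is therefore about \emph{not exposing individual $\bm{R}^{(u)}$}, which the hybrid argument delivers, rather than about hiding the aggregate. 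I would close by remarking that robustness to dropouts and malicious deviation is explicitly out of scope (the threat model fixes the participant set and assumes protocol-following behavior), so no further argument is needed there.
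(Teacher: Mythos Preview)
Your proposal is correct and follows essentially the same three-part structure as the paper's proof: invoke the SecAgg guarantees of \citet{bell2020secure} for confidentiality of individual contributions, appeal to the earlier theorems for correctness of the aggregates, and argue composition via the semi-honest security of SecAgg. The paper's version is more informal---it justifies the ``aggregate does not leak'' clause by semantic arguments (item degrees reveal only popularity, $\tilde{\bm{P}}$ only item similarity, $\bm{S}_k$ only a projector) and handles composition simply by citing that SecAgg is secure under composition in the semi-honest model together with the fixed-participant assumption to rule out differencing attacks---whereas you sketch an explicit hybrid/simulator construction and flag the intermediate $\bm{Y}_\ell$ iterates as the delicate point. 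That extra care is sound and buys you a cleaner story for PriviRec-$k$, but it goes beyond what the paper actually proves; the paper's argument and yours coincide in substance.
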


\begin{proof}[Proof of Lemma~\ref{secanallemma}.]
    The Secure Aggregation protocol is a well-established technique in Multi-Party Computation which guarantees that:
    \begin{enumerate}[leftmargin=*]
        \item Individual contributions from users ($\bm{R}^{(u)}$) are never revealed; only the aggregate result (e.g., $\sum_{u \in \mathcal{U}} \bm{R}^{(u)}$) is accessible. This is achieved through masking techniques and secret sharing~\cite{bell2020secure}.
        \item It is resistant to collusion up to a predefined threshold of adversarial participants \cite{bell2020secure}.
        \item The final output is correct and matches the expected aggregate sum, assuming that honest participants provide valid inputs.
    \end{enumerate} 

We now describe why the computations of the components of PriviRec and PriviRec-$k$ respect the security principles:
\subsubsection{Item Degrees Computation} The computation of the item degrees matrix $\bm{V}' = \text{SecAgg}(\{\bm{R}^{(u)}\})$ relies on Secure Aggregation, which ensures that individual contributions are masked, and only the aggregate is revealed. This step does not expose any user-specific data and is not sensitive because the aggregated item degrees vector only informs about the popularity of items.

\subsubsection{Item-Item Matrix Computation} The computation of
\begin{equation}
    \bm{P}' = \text{SecAgg}(\{\frac{1}{d_{\text{user}}(u)} \bm{R}^{(u)\top} \bm{R}^{(u)}\})
\end{equation}
also leverages Secure Aggregation. As in the previous step, individual user contributions ($\bm{R}^{(u)\top} \bm{R}^{(u)}$) remain confidential, and only the aggregated result is used, which represents the similarity between items but does not give information about users. 

\subsubsection{Ideal Low-Pass Filter Computation} The decentralized power method involves iterative computations using intermediate results aggregated through Secure Aggregation:
    $\bm{Y}_\ell = \text{SecAgg}(\{\bm{Y}_\ell^{(u)}\}).$
These intermediate results are aggregated across users and do not reveal individual contributions. Furthermore, the final outputs ($\bm{S}_k$, $\bm{F}_{IDL}$) only depend on aggregated data and randomized computations. $\bm{S}_k$ is a projector for rank $k$ approximations of the item-item matrix computation, which is not sensitive information.

\subsubsection{Security of Composition}
Secure Aggregation is secure under composition with the honest-but-curious (also known as semi-honest) threat model \citep{bell2020secure}. This argument has been used to allow the use of Secure Aggregation for Federated Learning \citep{bonawitz2017practical, liu2022privacy, secagg45808}. Other threats arise in this context due to the fact that different overlapping set of users can be used at different iterations, allowing differencing attacks. As stated, we assume that the same set of users participate in each iteration of the protocol, i.e., there are no dropouts and participants are honest.

We showed that no step reveals user-specific data in the clear. Moreover, the compositional security of Secure Aggregation ensures that combining these steps does not introduce vulnerabilities. Thus, the proposed protocol achieves both confidentiality and correctness under the stated threat model, protecting user privacy throughout its execution.
\end{proof}

\subsection{Communication cost analysis}
\label{commcostanalysis}
We denote by $n$ the number of users participating in the collaborative filtering process. Table~\ref{tbl:secureagg_custom}
presents the communication cost introduced by Secure Aggregation \citep{bell2020secure}. The table compares PriviRec, PriviRec-k, and federated learning based algorithms.

\begin{table}[!h]
\small
    \centering
    \caption{Communication overhead complexity ($O(\cdot)$) of using Secure Aggregation (F. GCNs/MFs denotes federated versions of most GCN or MF based recommender systems)}
        \vspace{-0.2cm}
    \label{tbl:secureagg_custom}
    \begin{tabular}{lcc}\toprule
             & \textbf{Client} & \textbf{Server} \\ \midrule
        PriviRec     & $L \log (n) + |\mathcal{I}|^2$   & $L n \log(n) + n|\mathcal{I}|^2$  \\
        PriviRec-$k$     & $L \log (n) + L k_2 \cdot |\mathcal{I}|$   & $L n \log(n) + L   k_2  n|\mathcal{I}|$  \\
        F. GCNs/MFs     & $e \log (n) +  e k_3 (|\mathcal{I}|+n)$   & $e n \log(n) + e k_3  n(|\mathcal{I}|+n)$  \\
        \bottomrule
    \end{tabular}
            \vspace{-2mm}
\end{table}
\begin{proof}[Communication overhead calculations]
We analyze the communication overhead introduced by Secure Aggregation in our methods by breaking down the different communications.

For the \textbf{PriviRec} method, Secure Aggregation is performed on the item degrees vector (size $|\mathcal{I}|$), the item-item matrix (size $|\mathcal{I}| \times |\mathcal{I}|$), and on intermediate results from the randomized power iteration algorithm  (each of size $|\mathcal{I}| \times k_1$, where $k_1$ is the number of singular vectors), over $L$ iterations (typically $L=2-5$). 

For the \textbf{PriviRec-$k$} variant, Secure Aggregation is used to compute the item degrees vector (of size $|\mathcal{I}|$), on intermediate results from the randomized power iteration over $L$ iterations and on the approximated singular values and vectors of the item-item matrix. The matrices exchanged are of sizes $|\mathcal{I}| \times k_2$, where $k_2$ is the number of singular vectors used to compute both the low-rank approximation of the item-item matrix and the ideal low pass filter. For the Gowalla and Yelp2018 datasets, $k_2$ can be set in the order of 2000 to have NDCG values competitive with PriviRec, while for those datasets $|\mathcal{I}|$ is fixed to $40,981$ and $38,048$ respectively.


In contrast, federated versions of graph convolutional networks (GCNs) or gradient-descent-based matrix factorization methods require more communication rounds because they train over $e$ epochs (typically $e=1000$). They need to communicate user and item embeddings of dimension $k_3$ in each epoch, resulting in a total communication size of $(|\mathcal{I}| + n) \times k_3 \times e$.

For one iteration of Secure Aggregation, the communication complexity for each client is \( O(\log(n) + l) \) (Section 3.4 of \citet{bell2020secure}), where \( n \) is the number of clients and \( l \) is the length of the message being communicated. In this protocol, each client communicates with \( k = O(\log(n)) \) other clients to exchange the necessary information for mask generation and aggregation. Specifically, these operations include multiple exchanges to coordinate the generation of masks, the transmission of masked inputs of size \( O(l) \), and the verification of consistency in the exchanged data.

On the server side, the communication complexity is \( O(n(\log n + l)) \), as the server interacts with \( n \) clients to aggregate the inputs and coordinate the secure computation process. By incorporating the number of iterations of Secure Aggregation required for the protocol and the sizes of the matrices involved, we derive the communication complexities presented in Table~\ref{tbl:secureagg_custom}.
\end{proof}
As shown in Table~\ref{tbl:secureagg_custom}, the communication overhead for both clients and the server is significantly lower when using \textbf{PriviRec-$k$} compared to \textbf{PriviRec}, because $ L \cdot k_2 \ll |\mathcal{I}| $. Moreover, computing the filters using PriviRec-$k$ requires less total communication than regular decentralized gradient descent methods used in GCNs or matrix factorization and in particular much less communication bursts (which introduce latency). Indeed, the only iterative process in PriviRec-$k$ is the decentralized randomized power iteration, which is more communication-efficient than gradient descent for computing matrix factorizations. Typically, PriviRec-$k$ requires a small number of iterations (e.g., $ L $ in the order of 2 to 5), whereas gradient descent methods may require a much larger number of iterations (e.g., $ e = 1000 $ \citep{he2020lightgcn}). 

For the Gowalla dataset, we use use $L=3$ for \textbf{PriviRec} and \textbf{PriviRec-$k$} and use $e=1000$ for the Federated GCN methods as advised by \citet{he2020lightgcn}. 
For this dataset, we have $k_1=256$, $k_2=2000$ and $k_3=64$. Using the communication overheads complexities from Table~\ref{tbl:secureagg_custom} with these values and the statistics of the Gowalla dataset yields dominant terms for client costs of 1,679M floats for PriviRec, 246M floatsfor PriviRec-$k$ and 4,534M floats for the Federated GCNs. Although we provide complexity expressions, the mapping to the actual number of floats transmitted for each algorithm is approximately the same. This indicates that PriviRec-$k$ not only preserves confidentiality but also offers significant communication efficiency advantages over traditional decentralized methods. 

\section{EXPERIMENTS}
In this section, we evaluate the impact of computing the filters of State of the Art methods GF-CF and BSPM with decentralized methods and the impact of the number of factors $k$ on the recommendation performance of GF-CF. 
\subsection{Datasets}
\begin{table}[htb]
\vspace{-0.2cm}
    \centering
    \caption{Statistics of datasets}\label{tbl:data}
    \vspace{-0.3cm}
    \begin{tabular}{lcccc}\toprule
        \textbf{Dataset}     & \textbf{Users} & \textbf{Items} & \textbf{Interactions} & \textbf{Density} \\ \midrule
        Gowalla     & 29,858  & 40,981  & 1,027,370      & 0.084\% \\
        Yelp2018    & 31,668  & 38,048  & 1,561,406      & 0.130\% \\
        Amazon-Book & 52,643  & 91,599  & 2,984,108      & 0.062\% \\
        \bottomrule
    \end{tabular}
    \vspace{-0.3cm}
\end{table}
We benchmark our approaches on the widely adopted Gowalla \cite{cho2011friendship}, Yelp2018 \cite{he2020lightgcn} and Amazon-Book \cite{wang2019neural} datasets, using the standard train/test splits to compare fairly with the existing literature. We show the dataset statistics in Table~\ref{tbl:data}.

\subsection{Metrics}

We use two common ranking metrics to benchmark the performances of the model: Recall@20 and NDCG@20. These widely-employed metrics measure the quality of top-$k$ item ranking  \citep{jarvelin2002cumulated, he2020lightgcn, choi2023blurring, shen2021powerful}.  

{\textbf{Recall}} measures the proportion of relevant items successfully retrieved among the top 20 recommendations, which makes it particularly suitable for tasks where completeness is critical.  

{\textbf{NDCG@20}} (Normalized Discounted Cumulative Gain) is used to evaluate the quality of ranking by assigning higher scores to relevant items appearing earlier in the recommended list. This metric provides an assessment of both relevance and ranking order.  

\subsection{Experimental setup}

We evaluate the impact of computing the filters of State of the Art methods GF-CF and BSPM with decentralized methods. Specifically, we compute the NDCG@20 on Gowalla, Yelp2018 and Amazon-Book to get the performance of BSPM and GF-CF when using decentralized components computed with \textbf{PriviRec}. Table~\ref{tbl:main_exp} reports results for the setting $L=2$.

We also study the impact of the number of factors $k$ on recommendation performance of GF-CF when using \textbf{PriviRec-$k$}. We vary $k$ between 256 and 3584 in increments of 256 and set $L=2$.

\subsection{Results}
\label{results}
\subsubsection{Impact of decentralization}
We can see in Table~\ref{tbl:main_exp} that the performances of GF-CF and BSPM with components computed with PriviRec are similar to those we obtain with their centralized counterparts.
This allows us to confirm experimentally that PriviRec can be used in recommender systems offering competitive results while allowing users to keep their data locally, strengthening user privacy. We note that the decentralized versions are theoretically equivalent to the centralized versions. Minor differences in NDCG can be attributed to the randomness of the Randomized Power Method and the numerical instabilities arising from performing a single QR decomposition per iteration after multiplication by \( A^\top A \). Although algebraically equivalent, the centralized implementation uses separate QR decompositions after the multiplication by \( A \) and \( A^\top \).

\begin{table}[htb]
\centering
  \vspace{-0.25cm}
\caption{Comparison of NDCG@20 Scores for Centralized (Centr.) and PriviRec-Based Models (Priv.) on Gowalla, Yelp2018 and Amazon-Book.}
  \vspace{-0.3cm}
\label{tbl:main_exp}
\begin{tabular}{l | c c  |  c c  | c c}
\toprule 
\textbf{Model} & \multicolumn{2}{c}{\textbf{Gowalla}} & \multicolumn{2}{c}{\textbf{Yelp}} & \multicolumn{2}{c}{\textbf{Amazon}} \\ 
& Centr. & Priv. & Centr. & Priv. & Centr. & Priv. \\ 
\midrule 
GF-CF & 0.1518 & 0.1528 & 0.0571 & 0.0561 & 0.0584 & 0.0584 \\
Turbo-CF & 0.1531 & 0.1531 & 0.0574 & 0.0574 & 0.0611 & 0.0611 \\
BSPM-LM & 0.1570 & 0.1580 & 0.0584 & 0.0576 & 0.0610 & 0.0610 \\
BSPM-EM & 0.1597 & 0.1595 & 0.0593 & 0.0589 & 0.0609 & 0.0609 \\
\bottomrule
\end{tabular}
\vspace{-0.2cm}
\end{table}

\subsubsection{Impact of the number of factors $k$}

We illustrate the importance of the number of factors $k$ by plotting recommendation performance of GF-CF when using \textbf{PriviRec-$k$} on Gowalla in Figure~\ref{fig:impactkperf} and on Yelp2018 in Figure~\ref{fig:impactkperfyelp} . 
For Gowalla, we can see that PriviRec-$k$ offers competitive NDCGs compared to the regular PriviRec for $k$ in the range $[2048, 3584]$. In fact, it even has higher NDCG than the baseline. It also offers competitive NDCG (although slightly inferior) on Yelp2018 when $k$ ranges between 2816 and 3584. To determine an appropriate value of $k$ for a particular system, one approach is to optimize over a validation set by running PriviRec-$k$ (and eventually PriviRec in parallel) for different values of $k$. This allows the selection of the smallest $k$ that achieves the desired performance while minimizing communication costs. 


\begin{figure}[ht] 
\centering
\begin{subfigure}{0.49\linewidth}
  \centering
  \includegraphics[width=\linewidth]{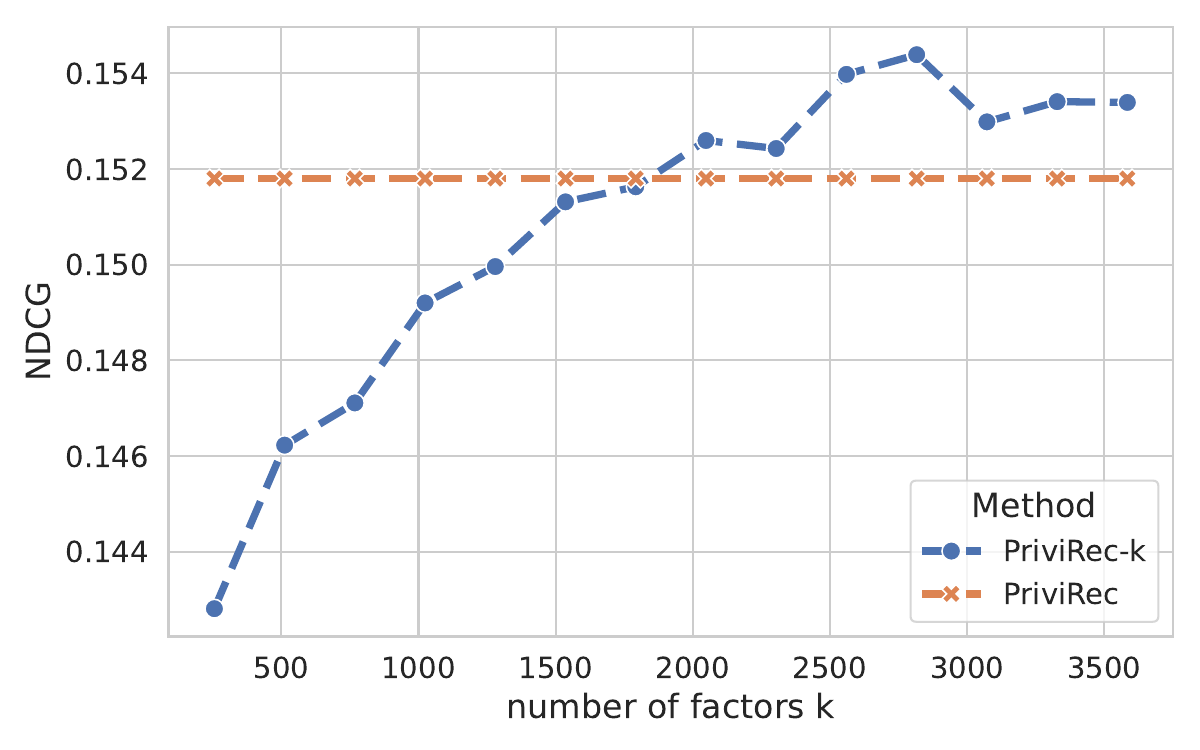}
  \vspace{-3mm}
  \caption{Gowalla}
  \label{fig:impactkperf}
\end{subfigure}
\hfill 
\begin{subfigure}{0.49\linewidth}
  \centering
  \includegraphics[width=\linewidth]{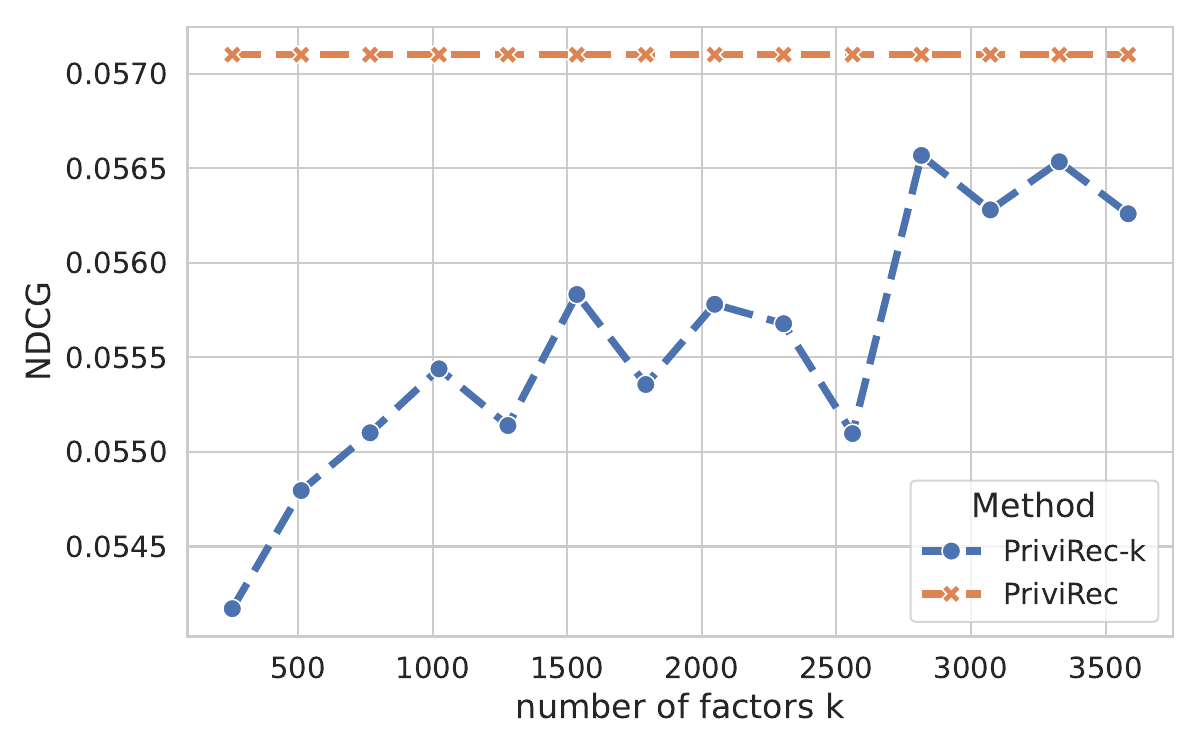}
  \vspace{-3mm}
  \caption{Yelp2018}
  \label{fig:impactkperfyelp}
\end{subfigure}
\caption{NDCG of GF-CF on Gowalla and Yelp2018, when its components are either computed using PriviRec, or PriviRec-$k$. We vary $k$ between 256 and 3584 in increments of 256.}
\vspace{-4mm}
\label{fig:test}
\end{figure}

\section{CONCLUSION \& FUTURE WORK}
We proposed PriviRec and PriviRec-$k$, two decentralized and secure frameworks for computing essential components of recommender systems, including the normalized item-item matrix and the ideal low-pass filter. By using Secure Aggregation and the distributed randomized power method, our methods strengthen user privacy while allowing competitive recommendation accuracy. PriviRec-$k$ leverages low-rank approximations to reduce communication overhead, providing an efficient trade-off between communication cost and recommendation utility.

In future work, we aim to adapt our methods to larger datasets and higher-dimensional embeddings. We could indeed explore factoring the embeddings using tensor trains, or use sparse versions of Secure Aggregation which can lower the communication overheads without sacrificing on privacy or accuracy. We also want to explore more formal notions of user privacy using Differential Privacy.

\clearpage
\clearpage
\bibliographystyle{abbrvnat}
\bibliography{sample-base}

\appendix
\clearpage
\section{Security of the decentralized randomized power iteration}
We now provide the proof of Theorem~\ref{computeilpf}.
\begin{proof}
    Algorithm~\ref{alg:decentralizedpowermethod} allows us the equivalent of the following using Secure Aggregation:
\begin{equation}
\begin{split}
\bm{Y}  &= (\tilde{\bm{R}}^\top \tilde{\bm{R}})^{\alpha} \tilde{\bm{R}}^\top \bm{\Omega} \\
        &= \bm{Q} \bm{T}, \\
\end{split}
\end{equation} such that $\bm{Y}$ is in the range of $\tilde{\bm{R}}^\top $ and $\bm{Q} \bm{T}$ is a  Gram-Schmidt QR decomposition of $\bm{Y}$, i.e. $\bm{Q}$ is an orthonormal matrix (i.e., $\bm{Q}^{\top} \bm{Q} = \bm{I}$) and $\bm{T}$ is an upper triangular matrix. 
Here $\alpha$ is a positive integer chosen as hyper-parameter and $\bm{\Omega}$ a random Gaussian Matrix. \\

According to \cite{halko2011finding}, we can then compute a rank K approximation of $\tilde{\bm{R}}^\top$, which will approximately share its top K singular vectors:
\begin{equation}
\begin{split}
\tilde{\bm{R}}^\top_K  &= \bm{Q}\bm{Q}^\top \tilde{\bm{R}}^\top. \\
\end{split}
\end{equation} 
We also define $ \bm{B} =  \bm{Q}^\top \tilde{\bm{R}}^\top $. If we define the SVD of $\bm{B}$ as $\bm{B}  = \bm{Z} \bm{\Sigma} \bm{M}^\top$, then we can deduce the SVD of $\tilde{\bm{R}}^\top_K$ indirectly as:
\begin{equation}
\begin{split}
\tilde{\bm{R}}^\top_K &= \bm{Q}\bm{Z} \bm{\Sigma} \bm{M}^\top. \\
\end{split}
\end{equation}
We therefore have:
\begin{equation}
\begin{split}
\tilde{\bm{R}}_K  &= \bm{M}\bm{\Sigma}^\top\bm{Z}^\top  \bm{Q}^\top \\
                &= \bm{M}\bm{\Sigma}^\top \bm{S}^\top. \\ 
\end{split}
\end{equation}

We recognize $\bm{S} = \bm{Q}\bm{Z}$ as the right singular vectors of $\tilde{\bm{R}}_K$, therefore as the top-K right singular values of $\tilde{\bm{R}}$.
We are only interested in the $\bm{S} \bm{S}^\top $ product, we do not need $\bm{S}$ itself. We notice that:
\begin{equation}
\begin{split}
\bm{S} \bm{S}^\top  &=  \bm{Q}\bm{Z}\bm{Z}^\top  \bm{Q}^\top \\
                 &=  \bm{Q}\bm{\mathcal{I}}\bm{Q}^\top \text{($\bm{Z}$ orthonormal by def.)}\\
                 &=  \bm{Q}\bm{Q}^\top.\\
\end{split}
\end{equation} 
It is therefore sufficient to apply the distributed version of the Randomized Power Iteration algorithm on $\tilde{\bm{R}}$ and to compute a QR decomposition to be able to compute $\bm{F}_{IDL}$ ($\bm{V}$  was computed in Sec. \ref{itemitemmatrix}) and we do not need to compute the full SVD of $\tilde{\bm{R}}$.
\end{proof}

\section{Use of the item-item matrix and the ideal low pass filter}
We use results of Table 4 of \citet{choi2023blurring} to show an NDCG@20 comparison across models on Gowalla, Yelp2018 and Amazon-Book, and additionally state whether those models use the item-item matrix and/or the Ideal Low Pass Filter defined previously. We present these results in Table~\ref{overall_ndcg}. We can see that most competitive models use the item-item gram matrix, and that some use the ideal low pass filter. 

\begin{table}[htb]
    \centering
    \caption{\label{overall_ndcg} NDCG@20 Comparison Across Models with numerical results from \citep{choi2023blurring}. We indicate models leveraging $\tilde{\bm{P}}$ and $\bm{SS^\top}$ with checkmarks. Yelp is short for Yelp2018 and Amazon for Amazon-Book.}
    {
    \begin{tabular}{lccccc}
        \toprule
        \textbf{Model} & \textbf{Gowalla} & \textbf{Yelp} & \textbf{Amazon} & \textbf{$\tilde{\bm{P}}$} & \textbf{$\bm{SS^\top}$} \\
        \midrule
        EASE\textsuperscript{R} \citep{steck2019EASE} & 0.1467 & 0.0552 & 0.0567 &  &  \\
        YouTubeNet \citep{Covington2016YoutubeNet}           & 0.1473 & 0.0567 & 0.0388 &  &  \\
        MF-CCL \citep{mao2021simplex}              & 0.1493 & 0.0572 & 0.0447 &  &  \\
        \midrule
        GAT \citep{velickovic2018GAT}            & 0.1236 & 0.0431 & 0.0235 &  &  \\
        JKNet \citep{xu2018jknet}            & 0.1391 & 0.0502 & 0.0343 &  &  \\
        DropEdge \citep{rong2020dropedge} & 0.1394 & 0.0506 & 0.0270 & \cmark &  \\
        APPNP \citep{klicpera2019appnp}    & 0.1462 & 0.0521 & 0.0299 & \cmark &  \\
        DisenGCN \citep{ma2019DisenGCN}      & 0.1174 & 0.0454 & 0.0254 &  &  \\
        \midrule
        LR-GCCF \citep{chen20LRGCCF}        & 0.1452 & 0.0498 & 0.0296 & \cmark &  \\
        LightGCN \citep{he2020lightgcn}    & 0.1554 & 0.0530 & 0.0315 & \cmark &  \\
        SGL-ED \citep{Wu2021SGLED}          & 0.1539 & 0.0555 & 0.0379 & \cmark &  \\
        DeosGCF \citep{liu2020deoscillated}        & 0.1477 & 0.0504 & 0.0316 &  &  \\
        IMP-GCN \citep{liu2021IMP-GCN}      & 0.1567 & 0.0531 & 0.0357 & \cmark &  \\
        BUIR\textsubscript{NB} \citep{lee2021BUIR} & 0.1301 & 0.0526 & 0.0346 &  &  \\
        DGCF \citep{Xiang2020DGCF19}             & 0.1561 & 0.0534 & 0.0324 &  &  \\
        IA-GCN \citep{zhang2022iagcn}            & 0.1562 & 0.0537 & 0.0373 &  &  \\
        UltraGCN \citep{Mao21UltraGCN}      & 0.1580 & 0.0561 & 0.0556 & \cmark &  \\
        SimpleX \citep{mao2021simplex}          & 0.1557 & 0.0575 & 0.0468 &  &  \\
        LT-OCF \citep{choi2021ltocf}         & 0.1574 & 0.0549 & 0.0341 & \cmark &  \\
        HMLET \citep{kong2022hmlet}          & 0.1589 & 0.0557 & 0.0371 & \cmark &  \\
        LinkProp \citep{fu2022revisiting}          & 0.1477 & 0.0559 & 0.0559 &  &  \\
        LinkProp-Multi \citep{fu2022revisiting}    & 0.1573 & 0.0571 & 0.0588 &  &  \\
        MGDCF \citep{hu2022mgdcf}            & 0.1589 & 0.0572 & 0.0378 & \cmark &  \\
        GTN \citep{fan2022GTN}                 & 0.1588 & 0.0554 & 0.0346 &  &  \\
        \midrule
        GF-CF \citep{shen2021powerful}     & 0.1518 & 0.0571 & 0.0584 & \cmark & \cmark \\
        Turbo-CF \citep{park2024turbo} & 0.1531 & 0.0574 & \textbf{0.0611} & \cmark &  \\
        BSPM-LM \citep{choi2023blurring}    & 0.1570 & 0.0584 & {0.0610} & \cmark & \cmark \\
        BSPM-EM \citep{choi2023blurring}    & \textbf{0.1597} & \textbf{0.0593} & {0.0609} & \cmark & \cmark \\
        \bottomrule
    \end{tabular}
    }
    \label{tab:teaser}
\end{table}



\section{Sequence diagrams}

Figure 4 provides a sequence diagram for the PriviRec algorithm, showing the computations and communications used to return the approximate item-item matrix and ideal low-pass filter. Figure 5 shows the equivalent sequence diagram for the PriviRec-k algorithm. 

\begin{figure*}
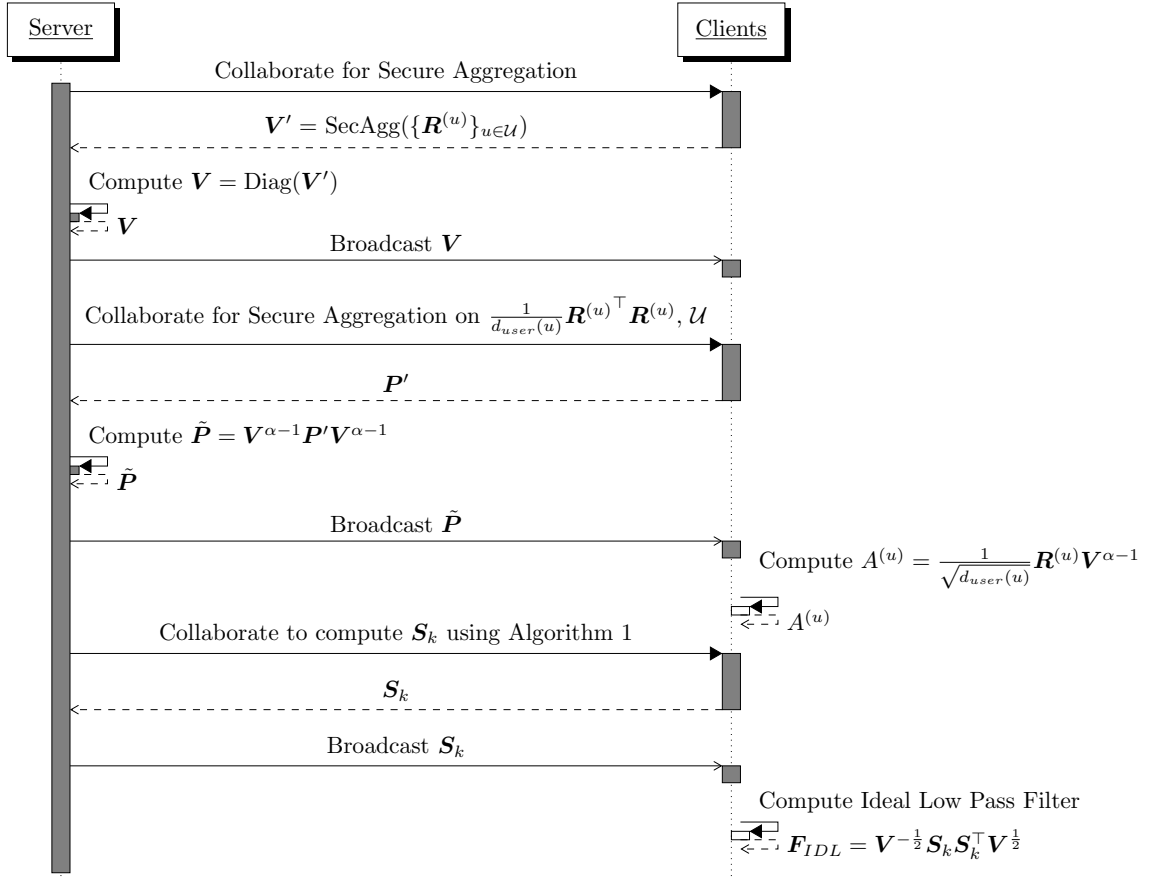

  \centering
  \scalebox{0.89}{
  \begin{sequencediagram}
    \newthread[gray]{A}{Server}{}
    \newinst[12]{B}{Clients}{}

    \begin{call}{A}{Collaborate for Secure Aggregation}{B}{$\bm{V}'=$ SecAgg($\{ \bm{R}^{(u)} \}_{u \in \mathcal{U}}$)}
           \postlevel
    \end{call}
        \postlevel
    \begin{call}{A}{Compute $\bm{V} = \text{Diag}(\bm{V}')$}{A}{$\bm{V}$}
    \end{call}
    \begin{messcall}{A}{Broadcast $\bm{V}$}{B}{}
    \end{messcall}
        \postlevel

    \begin{call}{A}{Collaborate for Secure Aggregation on $\frac{1}{d_{user}(u)} {\bm{R}^{(u)}}^\top \bm{R}^{(u)}$, $\mathcal{U}$}{B}{$\bm{P}'$}
           \postlevel
    \end{call}
        \postlevel
    \begin{call}{A}{Compute $\tilde{\bm{P}} = \bm{V}^{\alpha-1} \bm{P}' \bm{V}^{\alpha-1}$}{A}{$\tilde{\bm{P}}$}
    \end{call}
          \postlevel
    \begin{messcall}{A}{Broadcast $\tilde{\bm{P}}$}{B}{}
    \end{messcall}

    \begin{call}{B}{Compute $A^{(u)} = \frac{1}{\sqrt{d_{user}(u)}} \bm{R}^{(u)} \bm{V}^{\alpha-1}$}{B}{$A^{(u)}$}
    \end{call}

    \begin{call}{A}{Collaborate to compute $\bm{S}_k$ using Algorithm~\ref{alg:decentralizedpowermethod}}{B}{$\bm{S}_k$}
           \postlevel
    \end{call}
        \postlevel

    \begin{messcall}{A}{Broadcast $\bm{S}_k$}{B}{}
    \end{messcall}

    \begin{call}{B}{Compute Ideal Low Pass Filter}{B}{$\bm{F}_{IDL} = \bm{V}^{-\frac{1}{2}} \bm{S}_k \bm{S}_k^\top \bm{V}^{\frac{1}{2}}$}
    \end{call}

  \end{sequencediagram}
  }
\caption{Sequence diagram for the PriviRec Algorithm. The algorithm involves secure aggregation, server and clients-side computations to return the approximate item-item matrix $\tilde{\bm{P}}$ and the ideal low-pass filter $\bm{F}_{IDL}$.}
\label{fig:privirec_diagram}
\end{figure*}

\begin{figure*}
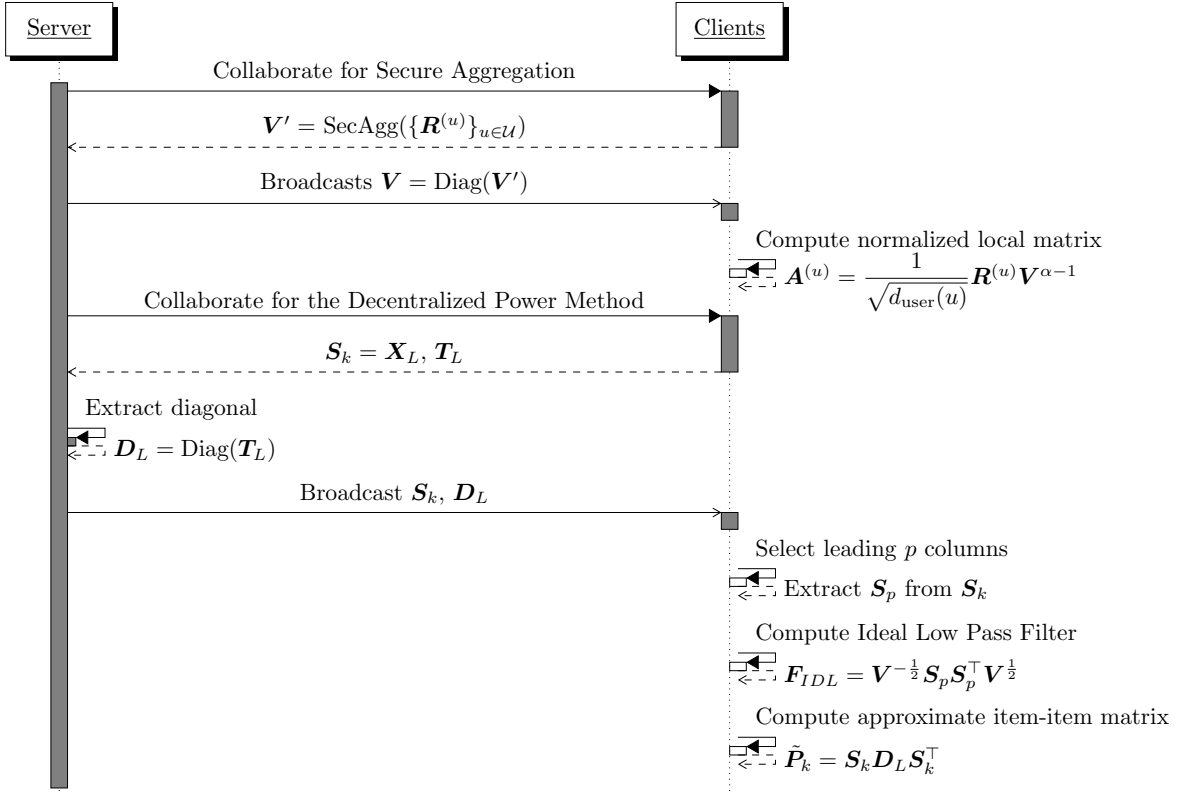

\small
  \centering
  \scalebox{0.89}{
  \begin{sequencediagram}
    \newthread[gray]{A}{Server}{}
    \newinst[12]{B}{Clients}{}

    \begin{call}{A}{Collaborate for Secure Aggregation}{B}{$\bm{V}' =$ SecAgg($\{ \bm{R}^{(u)} \}_{u \in \mathcal{U}}$)}
           \postlevel
    \end{call}
        \postlevel
    \begin{messcall}{A}{Broadcasts $\bm{V} = \text{Diag}(\bm{V}')$}{B}
    \end{messcall}
    \begin{call}{B}{Compute normalized local matrix}{B}{$\bm{A}^{(u)} = \dfrac{1}{\sqrt{d_{\text{user}}(u)}} \bm{R}^{(u)} \bm{V}^{\alpha -1}$}
    \end{call}
    \begin{call}{A}{Collaborate for the Decentralized Power Method}{B}{$\bm{S}_k = \bm{X}_L$, $\bm{T}_L$}
           \postlevel
    \end{call}

        \postlevel
    \begin{call}{A}{Extract diagonal}{A}{$\bm{D}_L = \text{Diag}(\bm{T}_L)$}
    \end{call}
            \postlevel
    \begin{messcall}{A}{Broadcast $\bm{S}_k$, $\bm{D}_L$}{B}{}
    \end{messcall}
    \begin{call}{B}{Select leading $p$ columns}{B}{Extract $\bm{S}_p$ from $\bm{S}_k$}
    \end{call}
            \postlevel
    \begin{call}{B}{Compute Ideal Low Pass Filter}{B}{ $\bm{F}_{IDL} = \bm{V}^{-\frac{1}{2}} \bm{S}_p \bm{S}_p^\top \bm{V}^{\frac{1}{2}}$}
    \end{call}
            \postlevel
    \begin{call}{B}{Compute approximate item-item matrix}{B}{$\tilde{\bm{P}}_k = \bm{S}_k \bm{D}_L \bm{S}_k^\top$}
    \end{call}

  \end{sequencediagram}
  }
\caption{Sequence diagram for the PriviRec-$k$ Algorithm. The algorithm involves collaborative computation, server and client-side computation to return the approximate item-item matrix $\tilde{\bm{P}}_k$ and the ideal low-pass filter $\bm{F}_{IDL}$.}
\label{fig:privireck_diagram}
\end{figure*}

\end{document}